\theoremstyle{definition}
\newtheorem{definition}{Definition}
\theoremstyle{plain}
\newtheorem{theorem}{Theorem}
\newtheorem{lemma}{Lemma}
\newcommand{\cmmnt}[1]{}
\newcommand{\mathsym}[1]{{}}
\newcommand{\unicode}[1]{{}}
\title{The Lagrangian remainder of Taylor's series, distinguishes $\mathcal{O}(f(x))$ time complexities to polynomials or not}
\author{Nikolaos P. Bakas\thanks{Computation-based Science and Technology Research Center, The Cyprus Institute, 20 Konstantinou Kavafi Street, 2121, Aglantzia Nicosia, Cyprus. e-mail: n.bakas@cyi.ac.cy.}, \hspace{3mm} Elias Kosmatopoulos\thanks{Democritus University of Thrace, Electrical and Computer engineering. e-mail: kosmatop@ee.duth.gr.}, Mihalis Nicolaou, \thanks{Computation-based Science and Technology Research Center, The Cyprus Institute, 20 Konstantinou Kavafi Street, 2121, Aglantzia Nicosia, Cyprus. e-mail: 	m.nicolaou@cyi.ac.cy}, \hspace{3mm} Savvas A. Chatzichristofis\thanks{Intelligent Systems Lab \&  Department of Computer Science, Neapolis University Pafos, 2 Danais Avenue, 8042 Pafos, Cyprus. e-mail: s.chatzichristofis@nup.ac.cy.} }
\date{}
\begin{document}

\maketitle

\section*{Abstract}

The purpose of this letter is to investigate the time complexity consequences of the truncated Taylor series, known as Taylor Polynomials \cite{bakas2019taylor,Katsoprinakis2011,Nestoridis2011}. In particular, it is demonstrated that the examination of the $\mathbf{P=NP}$ equality, is associated with the determination of whether the $n^{th}$ derivative of a particular solution is bounded or not. Accordingly, in some cases, this is not true, and hence in general.

\section{Univariate complexity}

\begin{definition}
Let the given problem is a known analytic function $f$ of one variable $x \in \mathbf{Z}^{+}$. Initially, the authors consider one-dimensional $x$, and later they generalize the results. Respectively, the \textit{time complexity} of the given problem, according to the literature \cite{sipser2006introduction}, may be written in the generic form of: 
\[\mathcal{O}(f(x)).\]
\end{definition}

The \textit{execution time }is usually calculated by some elementary algebraic operations of integers or real numbers \cite{johnson1979computers}, thus,  this assumption is considered to be adequate and valid. Accordingly, the Taylor series expansion of $f$  at $x+x_0$ may be written by
\[f(x)=f({{x}_{0}})+\frac{{f}'({{x}_{0}})}{1!}(x-{{x}_{0}})+\frac{{f}''({{x}_{0}})}{2!}{{(x-{{x}_{0}})}^{2}}+\cdots +\frac{{{f}^{(n)}}({{x}_{0}})}{n!}{{(x-{{x}_{0}})}^{n}}+\cdots,\]
with infinite terms, that is 
\begin{equation}
    f(x)={{T}_{n}}(x)+{{R}_{n}}(x),
    \label{eq:t-r}
\end{equation}
where ${{T}_{n}}(x)$ is the Taylor polynomial of order $n$, and ${{R}_{n}}(x)$ the \underline{remainder} of the $n^{th}$ degree Taylor polynomial.

If the function $f$ is analytic, it has $n+1$ derivatives for each point in the interval $\left| x-{{x}_{0}} \right|\le r$. Moreover, if the derivatives for each point are in the interval $\mid {{f}^{\left( n+1 \right)}}\left( x \right)\mid \le M$, then, the Lagrangian form \cite{kline1998calculus,apostol1967calculus} of the remainder may be expressed by:
\begin{equation}
    |{{R}_{n}}(x)|\le \frac{M}{(n+1)!}|x-{{x}_{0}}{{|}^{n+1}}\forall x:|x-{{x}_{0}}|\le r.
    \label{eq:reminder}
\end{equation}

This is subsequent of the explicit form of the remainder, stating that there exist a $\xi$ among $x$, and $x_0$, such that 

\[
 {{R}_{n}}(x)\le \frac{f^{(n)}(\xi)}{(n+1)!}(x-{{x}_{0}}{{)}^{n+1}}.
\]

\begin{theorem}
If $\mid {{f}^{\left( n+1 \right)}}\left( x \right)\mid \le M$, the algorithm with $\mathcal{O}(f(x))$ complexity, runs in polynomial time. 
\end{theorem}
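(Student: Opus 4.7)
The plan is to combine the Taylor decomposition \eqref{eq:t-r} with the Lagrangian bound \eqref{eq:reminder} to sandwich $f(x)$ between polynomials. Concretely, I would write $f(x)=T_n(x)+R_n(x)$, observe that $T_n(x)$ is by construction a polynomial in $x$ of degree at most $n$, and then invoke the hypothesis $|f^{(n+1)}(x)|\le M$ to control the remainder.

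The first step is to fix an expansion point $x_0$ (for instance $x_0=1 \in \mathbf{Z}^{+}$) so that $T_n(x)$ is a concrete polynomial with fixed coefficients $f^{(k)}(x_0)/k!$ for $k=0,\dots,n$. The second step is to apply \eqref{eq:reminder}, which, under the assumption $|f^{(n+1)}(x)|\le M$, yields
\[
|R_n(x)|\;\le\;\frac{M}{(n+1)!}\,|x-x_0|^{n+1}.
\]
This quantity is a polynomial in $x$ of degree $n+1$ with fixed coefficient $M/(n+1)!$. The third step is the triangle inequality:
\[
|f(x)|\;\le\;|T_n(x)|+|R_n(x)|\;\le\;|T_n(x)|+\frac{M}{(n+1)!}\,|x-x_0|^{n+1},
\]
so $f(x)$ is dominated by a polynomial of degree $n+1$ in $x$, from which $\mathcal{O}(f(x))=\mathcal{O}(x^{n+1})$ follows, i.e.\ polynomial time.

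The main obstacle, and the place where the argument must be handled carefully, is the scope on which the Lagrangian bound is valid. Inequality \eqref{eq:reminder} is asserted only for $|x-x_0|\le r$, whereas time complexity is an asymptotic notion in which $x\to\infty$. I would therefore need to argue that, under the stated hypothesis, the bound $M$ can be taken to hold on an arbitrarily large interval around $x_0$: since $f$ is assumed analytic on $\mathbf{Z}^{+}$ and $|f^{(n+1)}|$ is bounded by a single constant $M$ at every point, Taylor's theorem with Lagrange remainder applies on each compact sub-interval, and the resulting polynomial bound of degree $n+1$ is uniform in $x$. Once this uniformity is in place, the conclusion that $\mathcal{O}(f(x))$ lies in the polynomial class is immediate; the contrapositive (used in the subsequent discussion of $\mathbf{P}\ne\mathbf{NP}$) then becomes the statement that unboundedness of some derivative is necessary for super-polynomial complexity.
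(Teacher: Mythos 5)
Your proof takes essentially the same route as the paper's: both decompose $f$ as $T_n$ plus the Lagrange remainder and conclude that $f$ is dominated by a polynomial of degree $n+1$, hence $\mathcal{O}(f(x))=\mathcal{O}(x^{n+1})$. Your version is in fact more explicit than the paper's one-line argument, since you spell out the triangle-inequality step and address whether the bound $|R_n(x)|\le \frac{M}{(n+1)!}|x-x_0|^{n+1}$ holds uniformly as $x\to\infty$, a point the paper silently assumes.
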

\begin{proof}
Equations \ref{eq:t-r} and \ref{eq:reminder} obtain that:
\[f(x)={{T}_{n}}(x)+{{P}_{n+1}}(x),\]
thus, 
\[\mathcal{O}(f(x))=x^{n+1}.\]
which apparently is a polynomial, and hence of class $\mathbf P$.
\end{proof}

\begin{lemma}
If $\mid {{f}^{\left( n+1 \right)}}\left( x \right)\mid > M$, 
$f$ cannot be expressed as polynomials.
\end{lemma}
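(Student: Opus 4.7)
The plan is to proceed by contrapositive: assume that $f$ is a polynomial and deduce that $|f^{(n+1)}(x)|$ must be bounded on the interval $|x - x_0| \le r$, which then contradicts the hypothesis of the lemma.

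Concretely, I would write $f(x) = \sum_{k=0}^{d} a_k x^k$ for some finite degree $d$ and split into two cases according to the relative sizes of $n+1$ and $d$. If $n+1 > d$, then $f^{(n+1)} \equiv 0$, so $|f^{(n+1)}(x)| \le M$ for every $M \ge 0$, contradicting the hypothesis $|f^{(n+1)}(x)| > M$. If $n+1 \le d$, then $f^{(n+1)}$ is itself a polynomial of degree $d-(n+1)$, hence continuous on the compact set $\{x : |x-x_0| \le r\}$; the extreme value theorem then supplies a finite maximum $M^{\ast}$ of $|f^{(n+1)}|$ on this set, and any choice $M \ge M^{\ast}$ yields a contradiction with the hypothesis.

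The main obstacle is not either case individually but rather the precise formalization of the hypothesis ``$|f^{(n+1)}(x)| > M$''. As stated, the quantification over $x$ and $M$ is left implicit; the reading consistent with the preceding theorem and with the Lagrangian bound~\ref{eq:reminder} is that \emph{no} finite $M$ makes $|f^{(n+1)}(x)| \le M$ hold uniformly on $|x-x_0| \le r$, and it is under this reading that the contradiction in both cases closes cleanly. Under a weaker pointwise reading the conclusion would have to be softened to ``$f$ is not a polynomial of degree at most $n$,'' and to exclude polynomial representations of arbitrary degree one would additionally need the hypothesis to hold for every $n$. I would therefore state the hypothesis in the uniform form before invoking the two-case argument above.
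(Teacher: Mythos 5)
Your argument is correct under the uniform reading you adopt, but it takes a genuinely different route from the paper's. The paper proves the lemma by invoking Borel's theorem --- that every formal power series is the Taylor series of some $C^\infty$ function --- and then asserts that if $f^{(n)}$ is unbounded, $f$ cannot be written as a power series and hence not as a polynomial. That appeal is a non sequitur: Borel's theorem concerns the surjectivity of the map from smooth functions to formal series and says nothing about a function with unbounded derivatives failing to be polynomial; stripped of the citation, the paper's proof is just the bare assertion that a polynomial would have some bounded derivative. Your contrapositive makes that assertion precise and elementary: the $(n+1)$-st derivative of a polynomial is either identically zero or again a polynomial, hence continuous and therefore bounded on the compact set $\{x : |x-x_0|\le r\}$ by the extreme value theorem, contradicting the hypothesis. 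What your approach buys, beyond rigor, is an explicit identification of where the statement is fragile: as you observe, the quantification over $x$ and $M$ is left implicit, and this matters. Under the reading the paper actually uses in its examples (unboundedness as $x\to\infty$ over all of $\mathbf{Z}^{+}$ rather than on a compact interval), a single unbounded derivative does not preclude polynomiality --- $f(x)=x^{2}$ has unbounded $f'$ yet is a polynomial --- so one needs either your compact-interval reading or the ``for every $n$'' strengthening you describe, which is in fact what the paper's closing sentence (``no $n$ exist such that $f^{(n)}$ is limited by $M$'') tacitly assumes. Your version is the one I would keep; the paper's invocation of Borel's theorem adds nothing and obscures exactly the quantifier issues you flag.
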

\begin{proof}
\sloppy By utilizing Borel's theorem \cite{narasimhan1985analysis}, stating that any formal series $\sum _{n=0}^{\infty }a_{n}\left(x-x_0\right)^{n}$ is the Taylor series of a $C^\infty$-smooth function defined in an open neighborhood of $x_0$, it is  derived that if $f^{(n)}$ is not bounded, $f$ cannot be written as a power series, and hence as a polynomial, thus the problem is not in $\mathbf P$. In other words, if the problem was in $\mathbf P$, it could be written as a polynomial, and this would be the Taylor series, which is absurd as no $n$ exist such that the $f^{(n)}$ is limited by a $M$.
\end{proof}

\section{Multiple dimensions}
\sloppy
The given problem might have execution time depending on two or more variables $\mathbf{x}=\{x_1,x_2,\dots,x_m\} \in \mathbf{Z}^{m+}$, hence is is a function $f$ of a variable $\mathbf x$, that is 
\[\mathcal{O}(f(\mathbf{x})).\]

The above mentioned procedure may be applied to define if the complexity of the algorithm is polynomial, for each dimension $i \in \{1,2,\dots,m\}$, concerning the others as constants. If for one dimension $\mathcal{O}(f({x_i}))$ is not a polynomial, then $\mathcal{O}(f(\mathbf{x}))$ is not.

\section{Examples}

The presented analysis of computational time with Taylor series constitutes a basis for the investigation of whether a given complexity of $\mathcal{O}(\cdot)$ is a polynomial or not.

\subsection{Example of $\left|f^{(n)}(x)\right|> M$}
\subsubsection{$f(x)=e^x$}

In this simple case, one can easily observe that $f^{(n)}(x)=e^x \forall n$, which is not bounded by any $M$ for all $x$.

\[
r=\lim _{n\rightarrow \infty }\left|{\frac {a_{n}}{a_{n+1}}}\right|=\lim_{n\to\infty} \frac{\frac{x^n}{n!}}{\frac{x^{n+1}}{{n+1}!}}=\lim_{n\to\infty} \frac{n+1}{x}=\infty,
\]

and the series converges everywhere.

\subsubsection{$f(x)=2^x$}

A variety of algorithms execute in times calculated by a number, raised to the $x^{th}$ power, for example the matrix chain multiplication via brute-force search is $2^x$, a well as the lower bounds for the $\mathbf{AC}^0$ problem \cite{impagliazzo2001problems}. In this case:
\[\frac{df}{dx} = 2^x log(2),\]
\[\frac{d^2f}{dx^2} = 2^x log^2(2),\] and hence
\[\frac{d^nf}{dx^n} = 2^x log^n(2),\]

which is not bounded by any $M$ for all $x$.

 Apparently,
\[{{2}^{x}}=1+\frac{x\log (2)}{1!}+\frac{{{x}^{2}}{{\left( \log (2) \right)}^{2}}}{2!}+\cdots +\frac{{{x}^{n}}{{\left( \log (2) \right)}^{n}}}{n!}\]

\[r=\underset{n\to \infty }{\mathop{\lim \inf }}\,\frac{1}{\sqrt[n]{|{{a}_{n}}|}}=\underset{n\to \infty }{\mathop{\lim \inf }}\,\frac{\sqrt[n]{n!}}{x\left( \log (2) \right)}=\infty,\]

or equivalently,

\[r=\underset{n\to \infty }{\mathop{\lim }}\,\left| \frac{{{a}_{n}}}{{{a}_{n+1}}} \right|=\frac{\frac{{{x}^{n}}{{\left( \log (2) \right)}^{n}}}{n!}}{\frac{{{x}^{n+1}}{{\left( \log (2) \right)}^{n+1}}}{(n+1)!}}=\lim_{n\to\infty} \frac{(n+1)}{x\log(2)}=\infty,\]

and the series converges everywhere.

\subsubsection{$f(x)=2^{x^{1/2}}$}
\label{sec:sqrt2}

E.g. the monotone circuits computing \cite{alon1987monotone}. This is an example which though seems exponential, it is not verified if it indeed in not a polynomial (in contrast with \ref{sec:sqrt4}).

\[
\frac{d}{dx}2^{x^{1/2}} = \frac{2^{x^{1/2} - 1} log(2)}{x^{1/2}},
\]

and hence, 

\[
\underset{x\to \infty }{\lim} \frac{d}{dx}2^{x^{1/2}} = \infty.
\]






\subsubsection{$f(x)=x^{\log(x)}$}

Any monotone formula for the computation of a monotone function in NP must have size at least $\Omega(x^{\log(x)})$ \cite{razborov1990applications,robere2016exponential}.

\[
\frac{d}{dx}{x^{\log_2(x)}} = \frac{2 x^{\frac{log(x)}{log(2)} - 1} log(x)}{log(2)},
\]

where $log$ is the natural logarithm, and hence, 

\[
\underset{x\to \infty }{\lim} \frac{d}{dx}x^{\log(x)} = \infty.
\]

Accordingly,


\[\frac{\partial ^nx^{\log _2(x)}}{\partial x^n}=(-1)^n x^{\log _2(x)-n} \left(-\log _2(x)\right)_n\text{/;}n\in \mathbb{Z}\land
n\geq 0\land x \log (2)\neq \log (x)\]

and $(\cdot)_n$ is the Pochhammer Symbol, with 

\[(\xi)_n=(\Gamma(\xi+n))/(\Gamma(\xi))=(\xi+1) \dots (\xi+n-1).\]

Thus $\left(-\log _2(x)\right)_n$, never vanishes for $x>1$. Additionally, 

\[\lim_{x\to \infty } \, x^{\log (x)/\log (2)-n}=\infty,\]

an hence the $n^{th}$ derivative of $f$ is not bounded. The alliterating sign $(-1)^n$ results in $+/- \infty$ for the limit of $\frac{\partial ^nx^{\log _2(x)}}{\partial x^n}$. 

\subsection{Examples of $\left|f^{(n)}(x)\right| \leq M$}

\subsubsection{$f(x)=xlog(x)$}
This is a commonly resultant time (e.g.  comparison sort). 

\[\frac{df}{dx}=log(x)+x\frac{1}{x},\]
\[\frac{d^2f}{dx^2}=\frac{1}{x},\]

which is $ \leq M, \forall x>1$, even for the second derivative ($n=2$). 

In this case, $x \in \mathbf{Z}^+$, and hence, the radius of convergence is required to be $>0$. For $x>\frac{1}{2}$:

\[
log(x)=\sum_{n=1}^{\infty}\frac{\left(\frac{x-1}{x}\right)^n}{n},
\]

and hence 

\[
x log(x)=\sum_{n=1}^{\infty}x\frac{\left(\frac{x-1}{x}\right)^n}{n},
\]

thus, it is obtained that:

\[
r=\lim_{n\to\infty} |\frac{a_n}{a_{n+1}}|=\lim_{n\to\infty} \frac{x\frac{\left(\frac{x-1}{x}\right)^n}{n}}{x\frac{\left(\frac{x-1}{x}\right)^{n+1}}{n+1}}=\lim_{n\to\infty}\frac{n+1}{n\frac{x-1}{x}}=\frac{x}{x-1}>1 \forall x \in \mathbf{Z}^+ \footnote{if the radius of convergence $r>1$, is enough criterion for generalization if the Taylor polynomials are written about a point $x_0$, instead of zero, as one may write the $\mathcal{O}(f(x))$ with polynomials in branches $\forall x_0$.}.
\]

\subsubsection{$f(x)=2^{\log _2(x)}$}
\label{sec:sqrt4}
\[\frac{d2^{\log _2(x)}}{dx}=1.\]

\subsubsection{$f(x)=2^{\log _2\left(\log _2(x)\right)}$}

Results to:

\[\frac{d2^{\log _2\left(\log _2(x)\right)}}{dx}=\frac{1}{x \log (2)}, f^{(2)}(x)=-\frac{1}{x^2 \log (2)},\]

\[f^{(3)}(x)=\frac{2}{x^3 \log (2)}, f^{(4)}(x)=-\frac{6}{x^4 \log (2)}, f^{(5)}(x)=\frac{24}{x^5 \log (2)}, \]

thus

\[\frac{d^n2^{\log _2\left(\log _2(x)\right)}}{dx^n}=(-1)^{n-1}\frac{(n-1)!}{x^n \log (2)}\]

and hence, 

\[\lim_{x\to\infty} \sup f^{(n)}(x)=\lim_{x\to\infty} \inf f^{(n)}(x)=0; \forall n\ge1.\]


\section{Conclusion}

P versus NP problem is to determine whether every problem whose solution can be verified in polynomial time, can be also solved in polynomial time \cite{cook2006p}. In this letter, the authors highlighted that for a given complexity of $\mathcal{O}(f(x))$ for the solution of a problem, under certain criteria for a $n^{th}$ derivative of $f$, $f^{(n)}(x)$, this problem cannot be considered as a polynomial. We do not consider non-analytic, non differentiable problems, which anyway do not affect the obtain conclusion. Furthermore, some specific $\mathbf{NP}$ problems, have known exponential \underline{lower bounds}, e.g.  \cite{ukkonen1983exponential,kikot2012exponential,alon1987monotone,razborov1990applications,robere2016exponential}. In some of these cases, the $n^{th}$ derivative of $f$ is not bounded. Hence, in such cases, their solution can be \say{quickly} verified, in polynomial time, but on the other hand, by utilizing the aforementioned rationale, the lower bound of their solution cannot be expressed with a polynomial function of computational time, and hence cannot belong to class $\mathbf{P}$. Accordingly, the authors derive that in these cases (and hence in general), the proposed rationale, apart from distinguishing polynomial or not algorithms, based on existing lower bounds, might answer the question if 

\vspace{5mm}

\[\mathbf P \ne \mathbf {NP}.\]

\vspace{13mm}
\nomenclature{${{T}_{n}}(x)$}{Taylor polynomial of order $n$}
\nomenclature{$n$}{Order of the truncated polynomial, as well as the highest utilized derivative.}
\nomenclature{$m$}{number of variables for multidimensional problems.}
\nomenclature{${{R}_{n}}(x)$}{remainder of the $n^{th}$ degree Taylor polynomial}
\nomenclature{${{P}_{n}}(x)$}{polynomial of order $n$}
\nomenclature{$x$}{integer $\in \mathbf{Z}^+$, indicating the order of the problem.}
\nomenclature{$r$}{radius of convergence.}
\begin{multicols}{2}
\printnomenclature[1cm]
\end{multicols}

\bibliographystyle{IEEEtran}
\bibliography{refs}

\end{document}